\begin{document}

\newtheorem{mytheorem}{Theorem}
\newtheorem{lemma}{Lemma}

\algrenewcommand\algorithmicindent{0.5em}
\algnewcommand\algorithmicswitch{\textbf{switch}}
\algnewcommand\algorithmiccase{\textbf{case}}
\algnewcommand\algorithmicassert{\texttt{assert}}
\algnewcommand\Assert[1]{\State \algorithmicassert(#1)}
\algdef{SE}[SWITCH]{Switch}{EndSwitch}[1]{\algorithmicswitch\ #1\ \algorithmicdo}{\algorithmicend\ \algorithmicswitch}
\algdef{SE}[CASE]{Case}{EndCase}[1]{\algorithmiccase\ #1}{\algorithmicend\ \algorithmiccase}

\algtext*{EndSwitch}
\algtext*{EndCase}
\algtext*{EndWhile}
\algtext*{EndIf}
\algtext*{EndFor}
\algtext*{EndFunction}

\newif\ifboldnumber
\newcommand{\boldnext}{\global\boldnumbertrue}

\algrenewcommand\alglinenumber[1]{%
  \footnotesize\ifboldnumber\bfseries\fi\global\boldnumberfalse#1:}

\makeatletter
\def\@copyrightspace{\relax}
\makeatother

\title{Context-Free Path Querying with Structural Representation of Result}

\sloppy

\numberofauthors{2}

\author{
\alignauthor
       Semyon Grigorev\\
       \affaddr{Saint Petersburg State University}\\
       \affaddr{7/9 Universitetskaya nab.}\\
       \affaddr{St. Petersburg, 199034 Russia}\\
       \email{semen.grigorev@jetbrains.com}
\alignauthor
       Anastasiya Ragozina\\
       \affaddr{Saint Petersburg State University}\\
       \affaddr{7/9 Universitetskaya nab.}\\
       \affaddr{St. Petersburg, 199034 Russia}\\
       \email{ragozina.anastasiya@gmail.com}
}

\maketitle

\begin{abstract}
Graph data model and graph databases are very popular in various areas such as bioinformatics, semantic web, and social networks.
One specific problem in the area is a path querying with constraints formulated in terms of formal grammars.
The query in this approach is written as grammar, and paths querying is graph parsing with respect to given grammar.
There are several solutions to it, but how to provide structural representation of query result which is practical for answer processing and debugging is still an open problem.
In this paper we propose a graph parsing technique which allows one to build such representation with respect to given grammar in polynomial time and space for arbitrary context-free grammar and graph.
Proposed algorithm is based on generalized LL parsing algorithm, while previous solutions are based mostly on CYK or Earley algorithms, which reduces  time complexity in some cases

\end{abstract}


\terms{Languages, Algorithms}

\keywords{Graph database, path query, graph parsing, context-free grammar, top-down parsing, GLL, LL}

\section{Introduction}
Graph data model and graph data bases are very popular in various areas such as bioinformatics, semantic web, social networks, etc.
Extraction of paths which satisfy specific constraints may be useful for investigation of graph structured data and for detection of relations between data items.
One specific problem---path querying with constraints---is usually formulated in terms of formal grammars and is called formal language constrained path problem~\cite{FLCpathProblem}.


Classical parsing techniques can be used to solve formal language constrained path problem.
It means that such technique can be used for more common problem---graph parsing. 
Graph parsing may be required in graph data base querying, formal verification, string-embedded language processing, and another areas where graph structured data is used. 

Existing solutions in databases field usually employ such parsing algorithms as CYK or Earley(for example~\cite{ConjCFPathQuery}, ~\cite{GraphQueryWithEarley}). 
These algorithms have nonlinear time complexity for unambiguous grammars ($O(n^3)$ and $O(n^2)$ respectively).
Moreover, in case of CYK, the input grammar should be transformed to Chomsky normal form (CNF) which leads to grammar size increase.
To solve these problems, one can use such parsing algorithms as GLR and GLL which have cubic worst-case time complexity and linear complexity for unambiguous grammars.
Also there is no need to transform a grammar to CNF for these algorithms.
These facts allow us to improve performance of parsing in some cases.

Despite the fact that there is a set of path querying solutions~\cite{GraphQueryWithEarley, ConjCFPathQuery, QueryGraphWithData, RegularDBQuery}, query result exploration is still a challenge~\cite{hofman2015separabilityForRegQueryDebugging}, as also a simplification of complex query debugging.
Structural representation of query result can be used to solve these problems, and classical parsing techniques provide such representation---derivation tree---which contains exhaustive information about parsed sentence structure in terms of specified grammar.

Graph parsing can also be used to analyze dynamically generated strings or string-embedded languages. 
String variable in a program may gets multiple values in run time.
In order to convey statical analysis, value set of string variable can be over-approximated with regular language which is represented as a finite automaton.
Moreover, to check a syntactic correctness of dynamically generated strings, one should check that all generated strings (all paths from start states to final states in the given automaton) are correct with respect to the given context-free grammar. 
There are solutions to this problem: GLR-based checker of string-embedded SQL queries~\cite{Alvor1, Alvor2},
 parser of string-embedded languages~\cite{relaxedRNGLR} based on RNGLR parsing algorithm.
RNGLR-based algorithm allows to construct derivation forest (i.e. the set of derivation trees) for all correct paths in the input automaton.

In this paper we propose a graph parsing technique which allows one to construct structural representation of query result with respect to the given grammar.
This structure can be useful for query debugging and exploration. 
Proposed algorithm is based on generalized top-down parsing algorithm---GLL~\cite{scott2010gll}---which has cubic worst-case time complexity and linear time complexity for LL grammars on linear input.  

\section{Preliminaries}

In this work we are focused on the parsing algorithm, and not on the data representation, and we assume that whole input graph can be located in RAM memory in the optimal for our algorithm way.

We start by introduction of necessary definitions.
\begin{itemize}
  \item Context-free grammar is a quadruple $G=(N, \Sigma, P, S)$, where $N$ is a set of nonterminal symbols, $\Sigma$ is a set of terminal symbols, $S \in N$ is a start nonterminal, and $P$ is a set of productions. 
  \item $\mathcal{L}(G)$ denotes a language specified by grammar $G$, and is a set of terminal strings derived from start nonterminal of $G$: $L(G) = \{\omega | S \Rightarrow_{G}^{*} \omega\}$.
  \item Directed graph is a triple $M = (V,E,L)$, where $V$ is a set of vertices, $L \subseteq \Sigma$ is a set of labels, and a set of edges $E\subseteq V\times L\times V$. 
  We assume that there are no parallel edges with equal labels: for every $e_1=(v_1,l_1,v_2) \in E, e_2=(u_1,l_2,u_2) \in E$ if $v_1 = u_1$ and $v_2 = u_2$ then $l_1 \neq l_2$.
  \item $tag: E \rightarrow L$ is a helper function which allows to get tag of edge. $$tag(e = (v_1,l,v_2), e \in E) = l$$
  \item $\oplus: L^+ \times L^+ \rightarrow L^+$ denotes a tag concatenation operation.
  \item Path $p$ in graph $M$ is a list of incident edges: 
  \begin{align*}
   p &= e_0,e_1,\dots,e_{n-1} \\
     &= (v_0,l_0,v_1),(v_1,l_1,v_2),\dots,(v_{n-1},l_{n-1},v_n)
  \end{align*}
  where $v_i \in V$, $e_i \in E$, $e_i=(v_i,l_i,v_{i+1})$, $l_i \in L$, $|p| = n, n \geq 1$. 
  \item $P$  is a set of paths $\{p: p \text{ path in } M\}$, where $M$ is a directed graph.
  \item $\Omega: P \rightarrow L^+$ is a helper function which constructs a string produced by the given path. For every $p \in P$
  \begin{align*}
  & \Omega(p = e_{0},e_{1},\dots,e_{n-1}) = \\
  & tag (e_{0}) \oplus \dots \oplus tag (e_{n-1}).
  \end{align*}
\end{itemize}

Using these definitions, we state the context-free language constrained path querying as, given a query in form of grammar $G$, to construct the set of paths $$P=\{p|\Omega(p) \in \mathcal{L}(G)\}.$$

Note that, in some cases, $P$ can be an infinite set, and hence it cannot be represented explicitly. 
In order to solve this problem, in this paper, we construct compact data structure representation which stores all elements of $P$ in finite amount of space and allows to extract any of them.

\section{Motivating Example}\label{motivExample}

Suppose that you are student in a School of Magic.
It is your first day at School, so navigation in the building is a problem for you.
Fortunately, you have a map of the building (fig.~\ref{input}) and additional knowledge about building construction:
\begin{itemize}
  \item there are towers in the school (depicted as nodes of the graph in your map);
  \item towers can be connected by one-way galleries (represented as edges in your map);
  \item galleries have a ``magic'' property: you can start from any floor, but by following each gallery you either end up one floor above (edge label is `a'), or one floor below (edge label is `b'). 
\end{itemize}

\begin{figure}[h]
    \begin{center}
        \includegraphics[width=6cm]{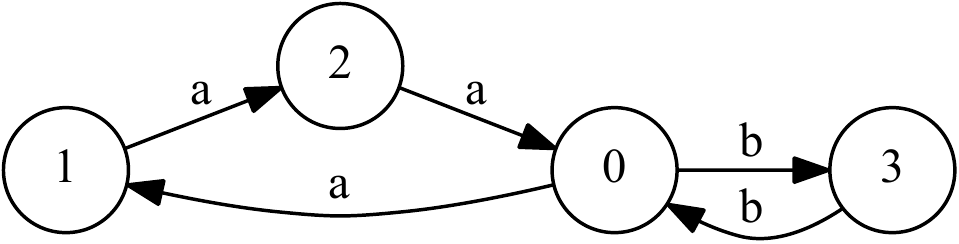}
        \caption{The map of School (input graph $M$)}
        \label{input}        
    \end{center}
\end{figure}

You want to find a path from your current position to the same floor in another tower. 
Map with all such paths can help you.
But orienteering is not your forte, so it would be great if the structure of the paths were as simple as possible and all paths had additional checkpoints to control your rout.

It is evident that the simplest structure of required paths is $\{ab, aabb, aaabbb, \dots\}$.
In terms of our definitions, it is necessary to find all paths $p$ such that $\Omega(p) \in \{a^n b^n, n \geq 1\}$ in the graph $M=(\{0;1;2;3\},E,\{a;b\})$ (figure~\ref{input}).

Unfortunately, language $\mathcal{L} = \{a^n b^n; n \geq 1\}$ is not regular which restricts the set of tools you can use. 
Another problem is the infinite size of solution, but, being incapable to comprehend an infinite set of paths, you want to get a finite map.  
Moreover, you want to know structure of paths in terms of checkpoints.

We are not aware of any existing tools which can solve this problem, thus we have created such tool.
Let us show how to get a map which helps to navigate in this strange School.

Fortunately, the language $\mathcal{L} = \{a^n b^n; n \geq 1\}$ is a context-free language and it can be specified with context-free grammar. 
The fact that one language can be described with multiple grammars allows to add checkpoints: additional nonterminals can mark required parts of sentences.
In our case, desired checkpoint can be in the middle of the path.
As a result, required language can be specified by the grammar $G_1$ presented in figure~\ref{grammarG}, where $N = \{s; \text{\textit{Middle}}\}$, $\Sigma = \{a; b\}$, and $S$ is a start nonterminal.

\begin{figure}[h]
   \begin{center}
   \[
\begin{array}{rl}
   0:& S \rightarrow a \ S \ b \\
   1:& S \rightarrow Middle \\
   2:& Middle \rightarrow a \ b
\end{array}
\]

   \caption{Grammar $G_1$ for language $L=\{a^n b^n; n \geq 1\}$ with additional marker for the middle of a path}
   \label{grammarG}        
   \end{center}
\end{figure}

In the next section, we present a graph parsing algorithm  which can be applicable to this kind of problems.

\section{Graph Parsing Algorithm}

We propose a graph parsing algorithm which allows to construct finite representation of parse forest which contains derivation trees for all matched paths in graph.
Finite representation of result set with respect to the specified grammar may be useful not only for results understanding and processing, but also for query debugging. 

Our solution is based on generalized LL (GLL)~\cite{scott2010gll, FastPracticalGLL} parsing algorithm which allows to process arbitrary (including left-recursive and ambiguous) context-free grammars with worst-case cubic time complexity and linear time complexity for LL grammars on a linear input. 

\subsection{Generalized LL Parsing Algorithm}

Classical LL algorithm operates with a pointer to input (position $i$) and with a grammar slot---pointer to grammar in form $N \rightarrow \alpha \cdot x \beta $.
Parsing may be described as a transition of these pointers from the initial position ($i = 0$, $S \rightarrow \cdot \beta $, where $S$ is start nonterminal) to the final ($i = input.Length$, $s \rightarrow \beta \cdot$).
At every step, there are four possible cases in processing of these pointers. 

\begin{enumerate}
\item $N \rightarrow \alpha \cdot x \beta $, when $x$ is a terminal and $x = input[i]$. In this case both pointers should be moved to the right ($i \leftarrow i + 1$, $N \rightarrow \alpha  x \cdot \beta $).
\item $N \rightarrow \alpha \cdot X \beta $, when $X$ is nonterminal. In this case we push return address $N \rightarrow \alpha X \cdot \beta $ to stack and move pointer in grammar to position $X \rightarrow \cdot \gamma$.\label{itm:2}
\item $N \rightarrow \alpha \cdot $. This case means that processing of nonterminal $N$ is finished. We should pop return address from stack and use it as new slot.\label{itm:3}
\item $S \rightarrow \alpha \cdot $, where $S$ is a start nonterminal of grammar. In this case we should report success if $i = input.Length - 1$ or failure otherwise. 
\end{enumerate}

In the second case there can be several slots $X \rightarrow \cdot \gamma$, so a strategy on how to choose one of them to continue parsing is needed.
In LL(k) algorithm lookahead is used, but this strategy is still not good enough because there are context-free languages for which deterministic choice is impossible even for infinite lookahead~\cite{LLnonLL}.
On the contrary to LL(k), generalized LL does not choose at all, handling all possible variants.
Note, that instead of immediate processing of all variants, GLL uses descriptors mechanism to store all possible branches and process them sequentially. 
Descriptor is a quadruple $(L, s, j, a)$ where $L$ is a grammar slot, $s$ is a stack node, $j$ is a position in the input string, and $a$ is a node of derivation tree. 

The stack in parsing process is used to store return information for the parser---a name of function which will be called when current function finishes computation. 
As mentioned before, generalized parsers process all possible derivation branches and parser must store it's own stack for every branch. 
It leads to an infinite stack growth being done naively.  
Tomita-style graph structured stack (GSS)~\cite{Tomita} combines stacks resolving this problem.
Each GSS node contains a pair of position in input and a grammar slot in GLL . 

In order to provide termination and correctness, we should avoid duplication of descriptors, and be able to process GSS nodes in arbitrary order. It is necessary to use the following additional sets for this.
\begin{itemize}
\item $R$---working set which contains descriptors to be processed. Algorithm terminates whenever $R$ is empty.
\item $U$---all created descriptors. Each time when we want to add a new descriptor to $R$, we try to find it in this set first.
This way we process each descriptor only once which guarantee termination of parsing.
\item $P$---popped nodes. Allows to process descriptors (and GSS nodes) in arbitrary order. 
\end{itemize}

Instead of explicit code generation used in classical algorithm, we use table version of GLL~\cite{TableGLL} in order to simplify adaptation to graph processing.
As a result, main control function is different from the original one because it should process LL-like table instead of switching between generated parsing functions.
Control functions of the table based GLL are presented in Algorithm~\ref{mainTblFunctions}.
All other functions are the same as in the original algorithm and their descriptions can be found in the original article~\cite{scott2010gll} or in Appendix~\ref{GLLCode}.

\begin{algorithm}[h]
\begin{algorithmic}[1]
\caption{Control functions of table version of GLL}
\label{mainTblFunctions}
\Function{dispatcher}{ \ }
  \If{$R.Count \neq 0$}  
      \State{$(L,v,i,cN) \gets R.Get()$}
      \State{$cR \gets dummy$}
      \State{$dispatch \gets false$}
  \Else
      \State{$stop \gets true$}
  \EndIf
\EndFunction

\Function{processing}{ \ }
  \State{$dispatch \gets true$}
  \Switch{$L$}
  \Case{$(X \rightarrow \alpha \cdot x \beta)$ where $x = input[i + 1])$}
       \If{$cN = dummyAST$} 
          \State{$cN \gets \Call{getNodeT}{i}$} 
       \Else 
          \State{$cR \gets \Call{getNodeT}{i}$}
       \EndIf
       \State{$i \gets i + 1$}
       \State{$L \gets (X \rightarrow \alpha x \cdot \beta)$}
       \If{$cR \neq dummy$}
          \State{$cN \gets \Call{getNodeP}{L, cN, cR}$} 
       \EndIf
       \State{$dispatch \gets false$}        
  \EndCase
  \Case{$(X \rightarrow \alpha \cdot x \beta)$ where $x$ is nonterminal}
       \State{$v \gets$ \Call{create}{$(X \rightarrow \alpha x \cdot \beta), v, i, cN$}}
       \State{$slots \gets pTable[x][input[i]]$}
       \ForAll{$L \in slots$}
          \State{\Call{add}{$L,v,i,dummy$}} 
       \EndFor
  \EndCase
  \Case{$(X \rightarrow \alpha \cdot )$}
       \State{\Call{pop}{v,i,cN}} 
  \EndCase
  \Case{$(S \rightarrow \alpha \cdot )$ when $S$ is start nonterminal}
       \State{final result processing and error notification} 
  \EndCase
  \EndSwitch
\EndFunction

\Function{control}{}
  \While{not $stop$}  
      \If{$dispatch$}
        \State{\Call{dispatcher}{ \ }}
      \Else
         \State{\Call{processing}{ \ }}
      \EndIf
  \EndWhile
\EndFunction

\end{algorithmic}
\end{algorithm}

There can be more than one derivation tree of a string with relation to ambiguous grammar.
Generalized LL build all such trees and compact them in a special data structure Shared Packed Parse Forest~\cite{SPPF}, which will be described in the following section.

\subsection{Shared Packed Parse Forest}

Binarized Shared Packed Parse Forest (SPPF)~\cite{brnglr} compresses derivation trees optimally reusing common nodes and subtrees.
Version of GLL which uses this structure for parsing forest representation achieves worst-case cubic space complexity~\cite{gllParsingTree}.

Let us present an example of SPPF for the input sentence \verb|"ababab"| and ambiguous grammar $G_0$ (fig~\ref{grammarG0}).

\begin{figure}[h]
   \begin{center}
   \[
\begin{array}{rl}

   0: & S \rightarrow \varepsilon  \\
   1: & S \rightarrow a \ S \ b \\
   2: & S \rightarrow S \ S    
\end{array}
\]
   \caption{Grammar $G_0$}
   \label{grammarG0}        
   \end{center}
\end{figure}

There are two different leftmost derivations of the given sentence w.r.t. grammar $G_0$, hence SPPF contains two different derivation trees. Resulting SPPF(fig.~\ref{sppf}) and two trees extracted from it (fig.~\ref{tree1} and fig.~\ref{tree2}) are presented in the figure~\ref{sppfSample}. 
 
\begin{figure*}[ht]
    \begin{center}
    \centering
    \begin{subfigure}[b]{0.3\textwidth}
        \includegraphics[width=\textwidth]{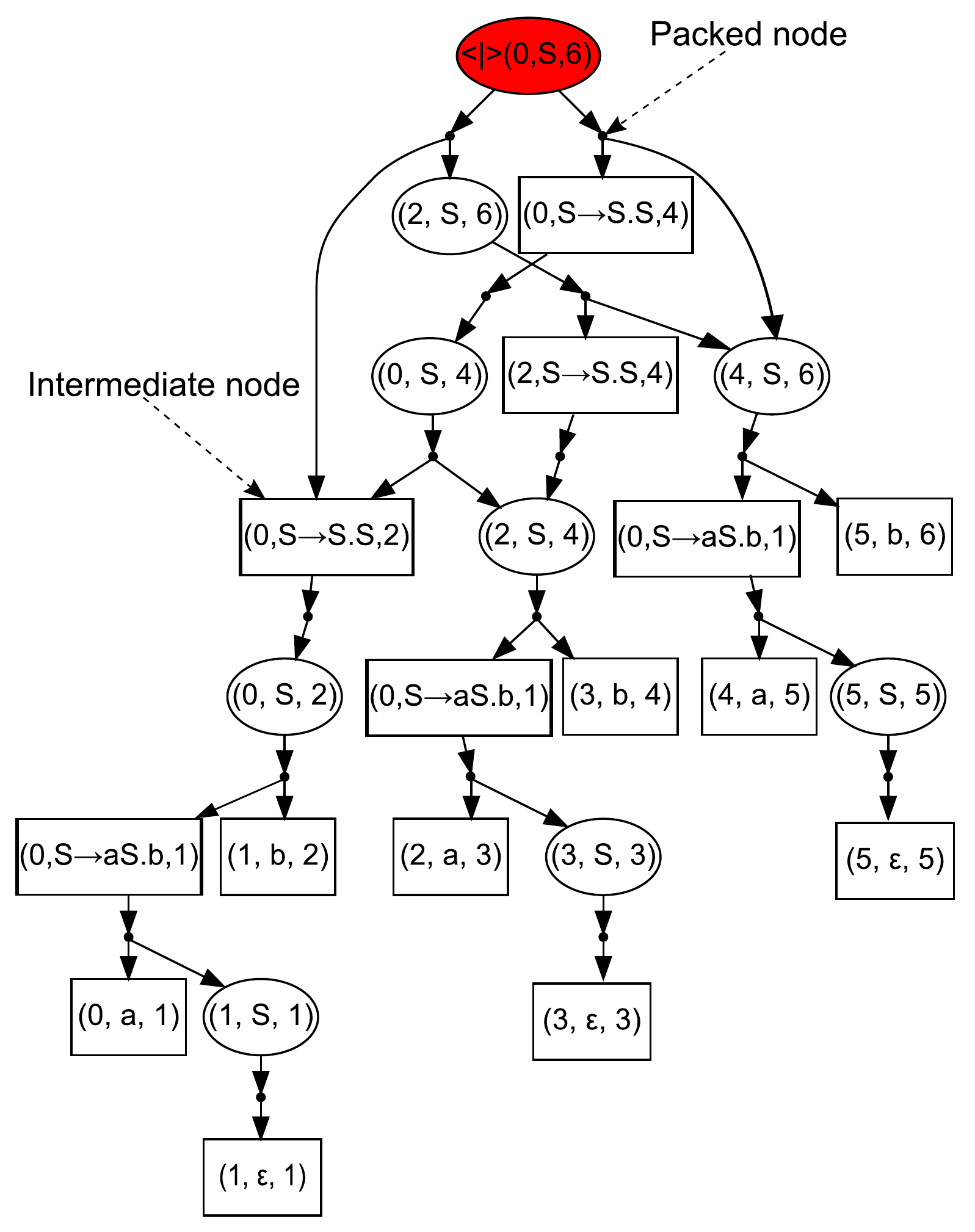}
        \caption{SPPF}
        \label{sppf}        
    \end{subfigure}
    ~
    \begin{subfigure}[b]{0.3\textwidth}
        \includegraphics[width=\textwidth]{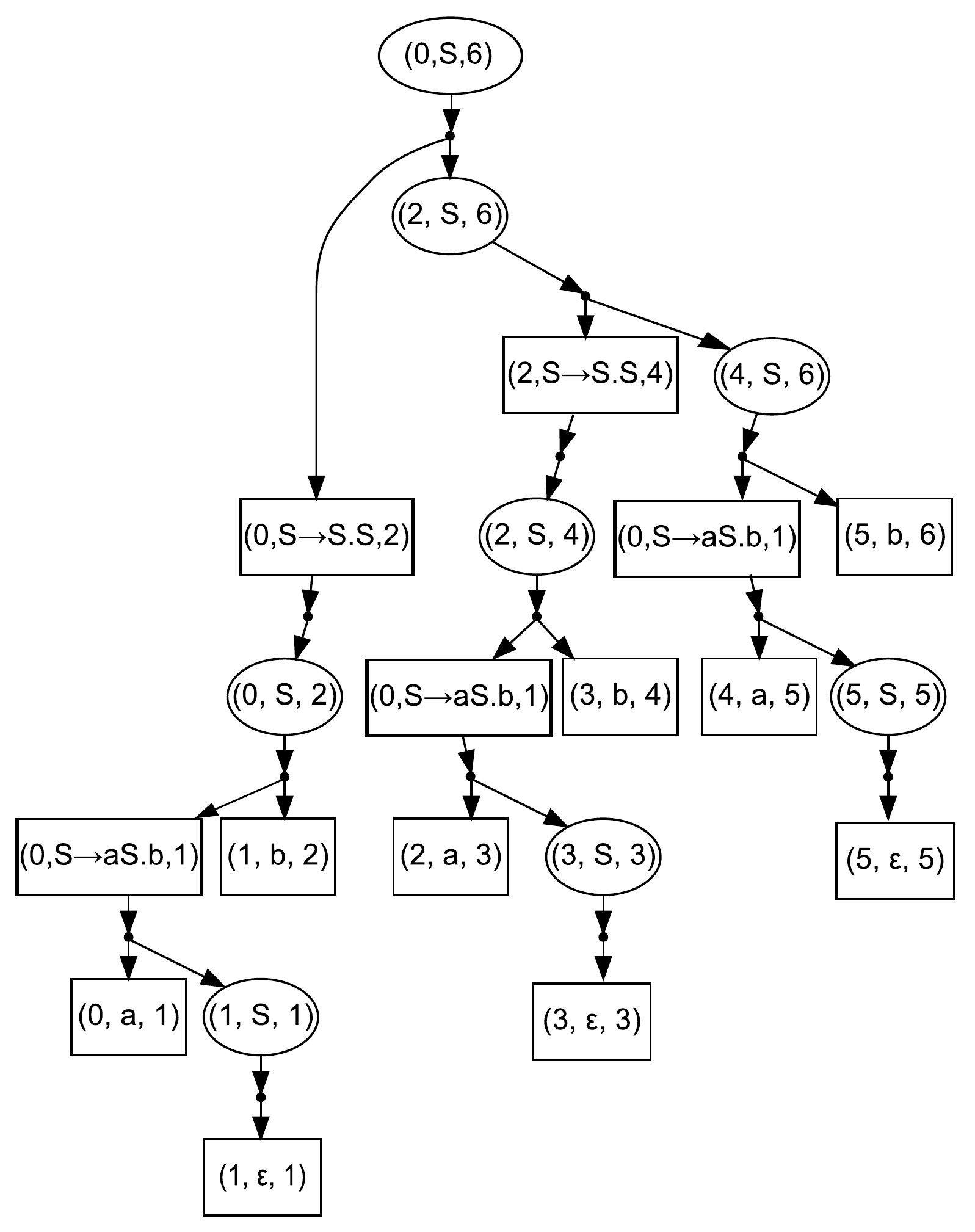}
        \caption{First derivation tree}
        \label{tree1}        
    \end{subfigure}
    ~
    \begin{subfigure}[b]{0.3\textwidth}
        \includegraphics[width=\textwidth]{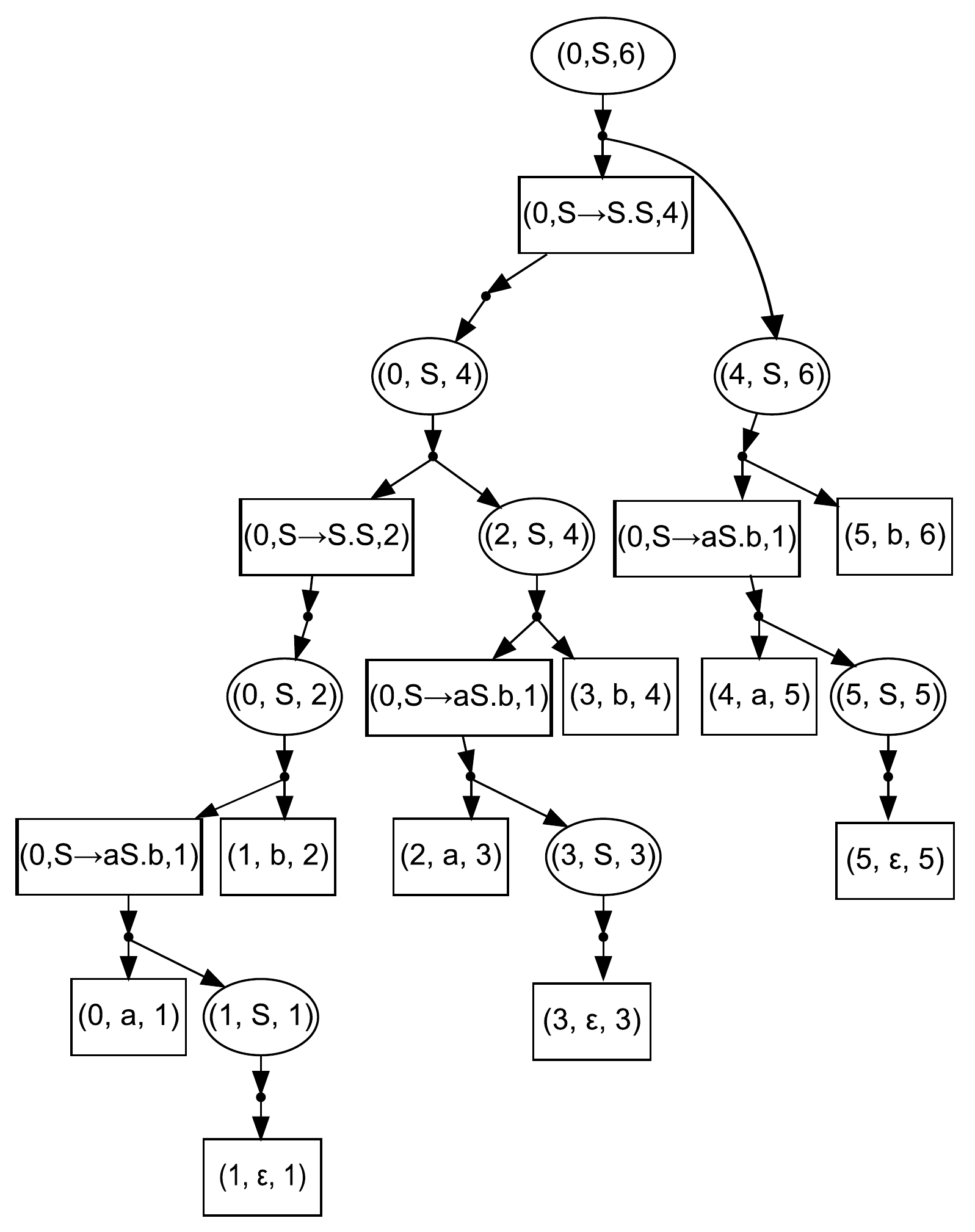}
        \caption{Second derivation tree}
        \label{tree2}        
    \end{subfigure}
    \caption{SPPF for sentence \textbf{\texttt{"ababab"}} and grammar $G_0$}
    \label{sppfSample}
    \end{center}                
\end{figure*}

Binarized SPPF can be represented as a graph in which each node has one of four types described below with correspondent graphical notation.
Let $i$ and $j$ be the start and the end positions of substring, and let us call a tuple $(i,j)$ an \textit{extension} of node.

\begin{itemize}
    \item Node of rectangle shape labeled with $(i, T, j)$ is a terminal node.     
    \item Node of oval shape labeled with $(i, N, j)$ is a nonterminal node. 
    This node denotes that there is at least one derivation for substring $\alpha=\omega[i..j-1]$ such that $N \Rightarrow^*_G \alpha, \alpha = \omega[i..j-1] $.
    All derivation trees for the given substring and nonterminal can be extracted from SPPF by left-to-right top-down graph traversal started from respective node. 
    We use filled shape and label of form $(<\mkern-11mu | \mkern-11mu> (i, N, j))$ to denote that there are multiple derivations from nonterminal $N$ for substring $\omega[i..j-1]$.
    \item Node of rectangle shape labeled with $(i,t,j)$, where $t$ is a grammar slot, is an intermediate node: a special kind of node used for binarization of SPPF.
    \item Packed node labeled with $(N \rightarrow \alpha \cdot \beta, k)$. In our pictures, we use dot shape for these nodes and omit labels because they are important only on SPPF constriction stage.
    Subgraph with ``root'' in such node is one variant of derivation from nonterminal $N$ in case when the parent is a nonterminal node labeled with $(<\mkern-9mu | \mkern-9mu> (i, N, j))$.

\end{itemize}

In our examples we remove redundant intermediate and packed nodes from the SPPF to simplify it and to decrease the size of structure.

\subsection{GLL-based Graph Parsing}

In this section we present such modification of GLL algorithm, that for input graph $M$, set of start vertices $V_s\subseteq V$, set of final vertices $V_f\subseteq V$, and grammar $G_1$, it returns SPPF which contains all derivation trees for all paths $p$ in $M$, such that $\Omega(p) \in L(G_1)$, and $p.start \in V_s,\ p.end \in V_f$.
In other words, we propose GLL-based algorithm which can solve language constrained path problem.

First of all, notice that an input string for classical parser can be represented as a linear graph, and positions in the input are vertices of this graph.
This observation can be generalized to arbitrary graph with remark that for a position there is a set of labels of all outgoing edges for given vertex instead of just one next symbol. 
Thus, in order to use GLL for graph parsing we need to use graph vertices as positions in input and modify \textbf{Processing} function to process multiple ``next symbols''.
Required modifications are presented in the Algorithm~\ref{modifAlgo}~(line \textbf{5} and \textbf{17}).
Small modification is also required for initialization of $R$ set: it is necessary to add not only one initial descriptor but the set of descriptors for all vertices in $V_s$.
All other functions are reused from original algorithm without any changes.

\begin{algorithm}[h]
\begin{algorithmic}[1]
\caption{\textbf{Processing} function modified in order to process arbitrary directed graph}
\label{modifAlgo}
\Function{processing}{\ }
  \State{$dispatch \gets true$}
  \Switch{$L$}
  \Case{$(X \rightarrow \alpha \cdot x \beta)$ where $x$ is terminal}
       \boldnext
       \ForAll{$\{ e | e \in input.outEdges(i), tag(e) = x \}$}
       \State{$new\_cN \gets cN$}
       \If{$new\_cN = dummyAST$} 
          \State{$new\_cN \gets \Call{getNodeT}{e}$} 
       \Else 
          \State{$new\_cR \gets \Call{getNodeT}{e}$}
       \EndIf
       \State{$L \gets (X \rightarrow \alpha x \cdot \beta)$}
       \If{$new\_cR \neq dummy$}
          \State{$new\_cN \gets \Call{getNodeP}{L, new\_cN, new\_cR}$} 
       \EndIf
       \State{\Call{add}{$L,v,target(e),new\_cN$}}
       \EndFor
  \EndCase
  \Case{$(X \rightarrow \alpha \cdot x \beta)$ where $x$ is nonterminal}
       \State{$v \gets$ \Call{create}{$(X \rightarrow \alpha x \cdot \beta), v, i, cN$}}
       \boldnext
       \State{$slots \gets \bigcup_{e \in input.OutEdges(i)} pTable[x][e.Token]$}
       \ForAll{$L \in slots$}
          \State{\Call{add}{$L,v,i,dummy$}} 
       \EndFor
  \EndCase
  \Case{$(X \rightarrow \alpha \cdot )$}
       \State{\Call{pop}{$v,i,cN$}} 
  \EndCase
  \Case{$\_$}
       \State{final result processing and error notification} 
  \EndCase
  \EndSwitch
\EndFunction

\end{algorithmic}
\end{algorithm}

Note that our solution handles arbitrary numbers of start and final vertices, which allows one to solve different kinds of problems arising in the field, namely all paths in graph, all paths from specified vertex, all paths between specified vertices. 
Also SPPF represents a structure of paths in terms of grammar which provides exhaustive information about result. 

Note that termination of proposed algorithm is inherited from the basic GLL algorithm.
We process finite graphs, hence the set of positions is finite, and tree construction has not been changed. 
As a result, the total number of descriptors is finite, and each of them is added in $R$ only once, thus main loop is finite.
\subsection{Complexity}

Time complexity estimation in terms of input graph and grammar size is quite similar to the estimation of GLL complexity provided in~\cite{gllParsingTree}.

\begin{lemma}\label{lem:Descriptors}
For any descriptor $(L,u,i,w)$ either $w = \$$ or $w$ has extension $(j,i)$ where u has index $j$.
\end{lemma}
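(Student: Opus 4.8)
The plan is to prove the statement by induction on the execution of the algorithm, i.e.\ on the order in which descriptors are placed in the working set $R$ (equivalently, on the number of calls to \emph{add}, \emph{create} and \emph{pop} performed so far). Because descriptors, GSS edges and popped pairs are built from one another, the lemma cannot be pushed through on its own; I would strengthen it into three coupled invariants proved simultaneously: (I1) every descriptor $(L,u,i,w)$ satisfies $w=\$$ or $\mathrm{ext}(w)=(\mathrm{index}(u),i)$; (I2) every GSS edge $u\xrightarrow{w}v$ satisfies $w=\$$ or $\mathrm{ext}(w)=(\mathrm{index}(v),\mathrm{index}(u))$; and (I3) every popped pair $(u,z)\in P$ satisfies $z=\$$ or the left endpoint of $\mathrm{ext}(z)$ equals $\mathrm{index}(u)$. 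I would also record the refinement that whenever $w=\$$ in a descriptor the slot $L$ has its dot at the very beginning and $\mathrm{index}(u)=i$; this is what lets the dummy cases of \emph{getNodeP} go through.

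The base case covers the initial descriptors added for the start vertices $V_s$: each has the form $(S\rightarrow\cdot\beta,\,u_0,\,v_s,\,\$)$ with $w=\$$, so (I1) holds vacuously, and $P$ and the non-trivial part of the GSS are empty. For the inductive step I would walk through every site at which a descriptor, an edge, or a popped pair is produced and verify the corresponding invariant from the invariants of the objects it is assembled from, using two facts about the SPPF constructors: \emph{getNodeT}$(e)$ returns a terminal node with extension $(\mathrm{src}(e),\mathrm{tgt}(e))$, and \emph{getNodeP}$(L,w,z)$ concatenates two nodes whose extensions meet at a common split point, returning a node whose extension is the union interval (and, when $w=\$$, simply the extension of $z$).

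The cases are as follows. In the terminal case of Algorithm~\ref{modifAlgo} the edge $e\in\mathrm{outEdges}(i)$ has $\mathrm{src}(e)=i$, so \emph{getNodeT}$(e)$ has extension $(i,\mathrm{target}(e))$; combined with the inductive hypothesis $\mathrm{ext}(cN)=(\mathrm{index}(v),i)$ (or, in the dummy subcase, $\mathrm{index}(v)=i$ from the refinement) this gives the new node extension $(\mathrm{index}(v),\mathrm{target}(e))$, matching the position $\mathrm{target}(e)$ of the emitted descriptor. In the nonterminal case \emph{create} makes a fresh GSS node of index $i$ and emits descriptors $(Y\rightarrow\cdot\gamma,\,u_{\mathrm{new}},i,\$)$ whose dummy SPPF node makes (I1) vacuous while validating the refinement; the edge it stores is labelled by the current $cN$, whose extension $(\mathrm{index}(v),i)$ is exactly $(\mathrm{index}(v),\mathrm{index}(u_{\mathrm{new}}))$, giving (I2). \emph{pop} is applied to a descriptor $(X\rightarrow\alpha\cdot,u,i,cN)$, so $\mathrm{ext}(cN)=(\mathrm{index}(u),i)$; it records $(u,cN)$ in $P$ (giving (I3)) and, along each edge $u\xrightarrow{w}v$, forms $y=\emph{getNodeP}(L,w,cN)$, concatenating $w$ (extension $(\mathrm{index}(v),\mathrm{index}(u))$ by (I2)) with $cN$ at their shared split point $\mathrm{index}(u)$ to obtain extension $(\mathrm{index}(v),i)$, which matches the descriptor $(L,v,i,y)$; the reuse of popped pairs inside \emph{create} is handled identically using (I3).

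The main obstacle is precisely the \emph{getNodeP} step in \emph{pop} and \emph{create}: to conclude that the concatenation is well formed and produces the union extension one must know that the edge label $w$ abuts $cN$ at $\mathrm{index}(u)$, which is not available from the lemma alone and forces the simultaneous proof of (I2); similarly, the dummy edge labels ($w=\$$, corresponding to an empty left part) only behave correctly because of the refinement that ties $\mathrm{index}(u)$ to $i$ for $\$$-descriptors. Handling $\varepsilon$-productions, where \emph{pop} is reached with $cN=\$$ and \emph{getNodeP} must synthesise an empty-extension nonterminal node $(\mathrm{index}(u),\mathrm{index}(u))$, is the one remaining delicate point and would be dispatched as a separate subcase.
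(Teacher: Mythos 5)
Your proposal is correct, but it takes a genuinely different route from the paper. The paper's entire proof is a one-sentence appeal to the corresponding lemma for the original GLL algorithm in the parse-tree-generation paper it cites, justified by the observation that the functions responsible for descriptor creation (\emph{add}, \emph{pop}, \emph{create}, and the SPPF constructors) are reused unchanged from that algorithm. You instead rebuild the argument from scratch: a simultaneous induction over the execution, with the descriptor invariant (I1) strengthened by the GSS-edge invariant (I2) and the popped-pair invariant (I3), plus the refinement tying dummy nodes to slots with the dot at the start and $\mathrm{index}(u)=i$. This coupling is exactly the right structural insight --- (I1) alone is not inductive, because descriptors produced in \emph{pop} and in the \emph{create} replay of popped pairs inherit their extensions from edge labels and popped nodes --- and it mirrors how the original published proof is organized. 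What your version buys is that it is self-contained and, more importantly, that it explicitly checks the cases that actually \emph{did} change in the graph adaptation: the terminal case now ranges over all outgoing edges with the new position being $\mathrm{target}(e)$ rather than $i+1$, and extensions are pairs of graph vertices rather than integer intervals. The paper's ``nothing relevant changed'' argument silently assumes these modifications preserve the invariant; your case analysis is precisely the verification of that assumption. What the paper's approach buys is brevity and avoidance of duplicating a published proof. One caveat on your write-up: the refinement for $\$$-descriptors should not claim the dot is always at the very beginning of a production in general GLL variants (descriptors re-added from $P$ in \emph{create} carry nondummy nodes, and nonterminal-case descriptors have the dot after the leading nonterminal's predecessor), but as you use it --- only for descriptors created with a dummy SPPF node --- it is accurate.
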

\begin{proof}
Proof of this lemma is the same as provided for original GLL in~\cite{gllParsingTree} because main function used for descriptors creation has not been changed.
\end{proof}

\begin{mytheorem}\label{thm:GSSSpace}
The GSS generated by GLL-based graph parsing algorithm for grammar $G$ and input graph $M=(V,E,L)$ has at most $O(|V|)$ vertices and $O(|V|^2)$ edges.
\end{mytheorem}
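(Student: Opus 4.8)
The plan is to count GSS vertices and edges by recalling how GSS nodes are created in the GLL-based graph parsing algorithm, then bounding the number of distinct labels each can carry. The key observation, stated in the excerpt, is that each GSS node is labeled with a pair consisting of a grammar slot and a position in the input — and in the graph setting, positions are exactly the vertices of $M$. So I would first establish precisely what data determines a GSS vertex and what determines a GSS edge, and then count the number of possible distinct values of that data.

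**First I would** bound the vertices. A GSS node is created only inside the \Call{create}{} routine (invoked in the nonterminal case of \textbf{Processing}), and it is labeled with a grammar slot of the form $(X \rightarrow \alpha X \cdot \beta)$ together with a position $i \in V$. Since the algorithm deduplicates GSS nodes (a node with a given label is reused rather than recreated), the number of distinct GSS vertices is at most the number of distinct such labels. The number of grammar slots is a constant $c_G$ depending only on the fixed grammar $G$, and the number of positions is $|V|$. Hence the vertex count is at most $c_G \cdot |V| = O(|V|)$, where the constant is absorbed because $G$ is fixed. This is the straightforward half.

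**The main work** is the edge bound. An edge in the GSS connects two GSS nodes, so a crude bound would give $O(|V|^2) \cdot c_G^2$ edges simply from counting all ordered pairs of vertices, which already yields $O(|V|^2)$ since the grammar contributes only a constant factor. I would make this rigorous by noting that each GSS edge is itself labeled (in GLL, an edge carries an SPPF/derivation-tree node, but for \emph{counting distinct edges} what matters is the pair of endpoint labels together with any edge label). The endpoints each range over $O(|V|)$ choices, so the number of endpoint-pairs is $O(|V|^2)$; the per-pair multiplicity contributed by grammar slots and edge decorations is again a grammar-dependent constant. The deduplication discipline of the GSS (edges, like nodes, are not duplicated) ensures we never exceed the number of distinct possible edges, giving $O(|V|^2)$.

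**The step I expect to be the main obstacle** is arguing carefully that the SPPF-node label on a GSS edge does not inflate the count beyond $O(|V|^2)$ — i.e., that for a fixed pair of endpoint GSS nodes there are only constantly (or at most $O(1)$ in $|V|$) many distinct edges. Here I would invoke Lemma~\ref{lem:Descriptors}: the derivation-tree node $w$ attached to a descriptor (and propagated onto GSS edges) either is the dummy $\$$ or has an extension $(j,i)$ whose first coordinate $j$ is pinned down by the index of the source GSS node. Since the source node already fixes its position $j$ and the target node fixes $i$, the extension of the associated SPPF node is essentially determined by the endpoints, so it contributes no extra factor of $|V|$. This ties the edge decoration back to the endpoint positions and confirms the $O(|V|^2)$ bound; I would structure the proof so that this lemma does the crucial pinning-down step.
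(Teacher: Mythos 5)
Your proposal is correct and takes essentially the same route as the paper: the paper's ``proof'' is just a citation to \textbf{Theorem 2} of~\cite{gllParsingTree} (noting the GSS structure is unchanged in the graph setting), and the argument it points to is exactly your counting argument. Your reconstruction --- slot--position labels with deduplication giving $O(|V|)$ nodes, and Lemma~\ref{lem:Descriptors} pinning the extension of the SPPF node decorating each GSS edge to the indices of its endpoints, so the per-pair multiplicity is a grammar constant and the edge count is $O(|V|^2)$ --- is precisely the cited proof with graph vertices playing the role of input positions.
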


\begin{proof}

Proof is the same as the proof of \textbf{Theorem 2} from~\cite{gllParsingTree} because structure of GSS has not been changed. 

\end{proof}

\begin{mytheorem}\label{thm:SPPFSpace}
The SPPF generated by GLL-based graph parsing algorithm on input graph $M=(V, E, L)$ has at most $O(|V|^3 + |E|)$ vertices and edges.
\end{mytheorem}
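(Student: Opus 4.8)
The plan is to bound the number of SPPF nodes by partitioning them according to the four node types introduced in the Shared Packed Parse Forest section, and then summing the bounds. The key observation is that SPPF nodes carry \emph{extensions}---tuples $(i,j)$ of graph vertices---rather than arbitrary labels, so counting reduces to counting how many distinct labels of each type can coexist with a given extension. First I would recall that, for graph parsing, the positions $i,j$ range over the vertex set $V$, so there are at most $|V|^2$ distinct extensions available.

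The main work is a case analysis over the four node types. For nonterminal nodes labeled $(i,N,j)$, the label is determined by a nonterminal $N \in N$ and an extension $(i,j) \in V \times V$; since the grammar is fixed, this gives $O(|V|^2)$ such nodes (the grammar size enters only as a constant factor absorbed into the $O$-notation). The same reasoning bounds intermediate nodes $(i,t,j)$: here $t$ is a grammar slot drawn from a fixed finite set, so again the count is $O(|V|^2)$. Terminal nodes $(i,T,j)$ are different: a terminal node corresponds to traversing a single edge $e \in E$, and because the graph has no parallel edges with equal labels, each terminal node is pinned to one edge and its endpoints; this yields the $O(|E|)$ contribution. I would next argue that this is why $|E|$ appears additively and cannot simply be folded into $|V|^2$.

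The subtle step is the packed nodes, which is where the cubic term $|V|^3$ originates. A packed node is labeled $(N \rightarrow \alpha \cdot \beta, k)$, where $k$ is the \emph{pivot} position (a vertex) at which the two children are split; it always sits beneath a node of fixed extension $(i,j)$. So a packed node is determined by a parent extension $(i,j)$, a grammar slot (from a fixed finite set), and the pivot $k \in V$. Counting one factor of $|V|$ for each of $i$, $j$, and $k$ gives $O(|V|^3)$ packed nodes. I would then bound the edges: each node type has out-degree bounded by a constant times $|V|$ (a nonterminal or intermediate node branches into packed children indexed by a pivot, while a packed node has at most two children), so the edge count is dominated by the same $O(|V|^3 + |E|)$ expression, and the theorem follows by summing the per-type bounds.

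The hard part will be justifying rigorously that no two derivations produce distinct packed nodes sharing the same triple $(\text{parent extension}, \text{slot}, \text{pivot})$---that is, that the SPPF construction genuinely \emph{shares} such nodes rather than duplicating them. This is exactly the property that distinguishes a binarized SPPF from a naive parse forest, and it relies on the descriptor-deduplication sets $U$ and $P$ together with the node-creation functions \Call{getNodeT}{}, \Call{getNodeP}{} being keyed on extension and slot. Since the excerpt states that these tree-construction functions are unchanged from the original GLL of~\cite{gllParsingTree}, I would invoke the corresponding sharing argument from that work, adapting only the index ranges from input positions $0..n$ to graph vertices $V$, and reducing the whole claim to the analogous SPPF space bound for string GLL.
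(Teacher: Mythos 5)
Your proof follows essentially the same route as the paper's: a case analysis over the node types yielding $O(|E|)$ terminal, $O(|V|^2)$ nonterminal and intermediate, and $O(|V|^3)$ packed nodes (counted by parent extension, grammar slot, and pivot), followed by the same out-degree argument for the edge bound. The only differences are cosmetic: you omit the $\varepsilon$-nodes $(v,\varepsilon,v)$, an $O(|V|)$ term the paper counts separately but which is dominated anyway, and you make explicit the node-sharing assumption (deferring to the original GLL analysis), which the paper's proof uses implicitly.
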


\begin{proof}
Let us estimate the number of nodes of each type.
\begin{itemize}
\item \textbf{Terminal nodes} are labeled with $(v_0, T, v_1)$, and such label can only be created if there is such $e \in E$ that $e=(v_0, T,v_1)$. 
Note, that there are no duplicate edges. 
Hence there are at most $|E|$ terminal nodes.
\item \textbf{$\varepsilon$-nodes} are labeled with $(v, \varepsilon, v)$, hence there are at most $|V|$ of them. 
\item \textbf{Nonterminal nodes} have labels of form $(v_0, N, v_1)$, so there are at most $O(|V|^2)$ of them.
\item \textbf{Intermediate nodes} have labels of form $(v_0, t, v_1)$, where $t$ is a grammar slot, so there are at most $O(|V|^2)$ of them.
\item \textbf{Packed nodes} are children either of intermediate or nonterminal nodes and have label of form $(N \rightarrow \alpha \cdot \beta, v)$.
There are at most $O(|V|^2)$ parents for packed nodes and each of them can have at most $O(|V|)$ children.
\end{itemize}

As a result, there are at most $O(|V|^3 + |E|)$ nodes in SPPF.

The packed nodes have at most two children so there are at most $O(|V|^3 + |E|)$ edges which source is packed node. 
Nonterminal and intermediate nodes have at most $O(|V|)$ children and all of them are packed nodes.
Thus there are at most $O(|V|^3)$ edges with source in nonterminal or intermediate nodes. As a result there are at most $O(|V|^3 + |E|)$ edges in SPPF.

\end{proof}

\begin{mytheorem}
The worst-case space complexity of GLL-based graph parsing algorithm for graph $M=(V,E,L)$ is $O(|V|^3 + |E|)$.
\end{mytheorem}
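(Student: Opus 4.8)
The final theorem asserts that the worst-case space complexity of the algorithm on a graph $M=(V,E,L)$ is $O(|V|^3 + |E|)$. The natural approach is to observe that the total space consumed by the algorithm is the sum of the space used by each of the data structures maintained during parsing, so it suffices to bound each of them separately and then combine the bounds. The data structures are precisely those introduced in the Preliminaries and the algorithm description: the GSS, the SPPF, and the bookkeeping sets $R$, $U$, and $P$ of descriptors. I would therefore structure the proof as a case analysis over these components, invoking the two preceding theorems for the two structurally interesting pieces and handling the descriptor sets directly.

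\textbf{Key steps, in order.} First I would invoke Theorem~\ref{thm:SPPFSpace} to conclude that the SPPF contributes $O(|V|^3 + |E|)$ to the space bound, since that theorem already bounds both its vertices and its edges by exactly this quantity. Second, I would invoke Theorem~\ref{thm:GSSSpace} to bound the GSS by $O(|V|)$ vertices and $O(|V|^2)$ edges, which is dominated by $O(|V|^3+|E|)$ and so contributes nothing new. Third, I would bound the descriptor sets. A descriptor is a quadruple $(L,s,j,a)$ where $L$ is a grammar slot, $s$ a GSS node, $j$ a graph vertex, and $a$ an SPPF node. The crucial observation, supplied by Lemma~\ref{lem:Descriptors}, is that the SPPF component $a$ is not free: its extension is forced to be $(j',j)$ where $j'$ is the index of the GSS node $s$, so the descriptor is essentially determined by $L$, $j$, and the index of $s$. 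Since $L$ ranges over a constant number of grammar slots and the two positions each range over $V$, the number of distinct descriptors stored in $U$ (and hence ever present in $R$) is $O(|V|^2)$. The popped set $P$ likewise stores pairs of GSS nodes and vertices and is $O(|V|^2)$. All of these are dominated by $O(|V|^3)$. Finally I would sum the contributions and conclude that the dominant term is the SPPF bound, giving the claimed $O(|V|^3 + |E|)$.

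\textbf{Main obstacle.} The only nontrivial step is the descriptor-counting bound, and specifically arguing that the SPPF node $a$ inside a descriptor does not inflate the count beyond $O(|V|^2)$. Naively one might fear that for each $(L,s,j)$ there could be many admissible $a$, which would push the descriptor count up toward the SPPF size. This is exactly what Lemma~\ref{lem:Descriptors} rules out, by pinning the extension of $a$ to $(j',j)$ determined by the other coordinates; combined with the fact that duplicate descriptors are suppressed via the set $U$, this keeps the descriptor space at $O(|V|^2)$. The remaining arithmetic is routine, and since every component is dominated by the SPPF term already established in Theorem~\ref{thm:SPPFSpace}, the overall bound follows immediately.
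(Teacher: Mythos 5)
Your proposal is correct and follows essentially the same route as the paper: the paper's entire proof is the observation that the bound follows immediately from Theorem~\ref{thm:GSSSpace} (GSS: $O(|V|)$ vertices, $O(|V|^2)$ edges) and Theorem~\ref{thm:SPPFSpace} (SPPF: $O(|V|^3+|E|)$ vertices and edges), with the SPPF term dominating. Your additional step of bounding the descriptor sets $R$, $U$, $P$ via Lemma~\ref{lem:Descriptors} is a point the paper leaves implicit (it only uses that lemma later, in the runtime proof of Theorem~\ref{thm:complexity}), so your argument is if anything slightly more complete than the paper's.
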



Immediately follows from theorems~\ref{thm:GSSSpace} and~\ref{thm:SPPFSpace}. 


\begin{mytheorem}\label{thm:complexity}
The worst-case runtime complexity of GLL-based graph parsing algorithm for graph $M=(V,E,L)$ is $$O\left(|V|^3*\max\limits_{v \in V}\left(deg^+\left(v\right)\right)\right).$$
\end{mytheorem}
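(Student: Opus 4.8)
The plan is to bound the total running time by the number of descriptors the algorithm processes, multiplied by the cost of processing a single descriptor, together with the cost of maintaining the GSS and the SPPF. I would follow the analysis of the original GLL algorithm in~\cite{gllParsingTree}, observing that the role played there by the string length $n$ (the number of input positions) is played here by $|V|$, since the vertices of $M$ are exactly the parsing positions. By Lemma~\ref{lem:Descriptors}, in every descriptor $(L,u,i,w)$ the SPPF component $w$ is either $\$$ or has extension $(j,i)$ with $u$ of index $j$; hence, for a fixed grammar, $w$ is determined up to an $O(1)$ choice of label once $u$ and $i$ are fixed. Combining this with the $O(|V|)$ bound on GSS nodes from Theorem~\ref{thm:GSSSpace} and the $|V|$ possible positions $i$, the number of distinct descriptors is $O(|V|^2)$, and each is processed exactly once thanks to the set $U$. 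The cubic base term $O(|V|^3)$ then arises exactly as in the string case: the $pop$ operation triggered by a descriptor may traverse up to $O(|V|)$ outgoing GSS edges, and, equivalently, the number of SPPF packed nodes created over the whole run is $O(|V|^3)$ by Theorem~\ref{thm:SPPFSpace}.

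Next I would isolate the only places where the graph algorithm deviates from the base algorithm, namely lines~\textbf{5} and~\textbf{17} of Algorithm~\ref{modifAlgo}. In the terminal case the single comparison against $input[i+1]$ is replaced by a loop over the set $\{e \mid e \in input.outEdges(i),\ tag(e)=x\}$, and in the nonterminal case the single table lookup is replaced by the union $\bigcup_{e \in input.OutEdges(i)} pTable[x][e.Token]$ followed by a loop over the resulting slots. Both loops run at most $deg^+(i) \le \max_{v \in V} deg^+(v)$ times, and each iteration performs only the constant-work operations that the base algorithm performed once per step (a $getNodeT$/$getNodeP$ and an $add$, or a table lookup and an $add$). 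Consequently every elementary forward step of the base algorithm is inflated by a factor of at most $\max_{v \in V} deg^+(v)$.

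I would then combine the two observations: the base algorithm runs in $O(|V|^3)$, and the graph modifications multiply its per-step cost by at most $\max_{v \in V} deg^+(v)$, which yields the stated bound $O\!\left(|V|^3 \cdot \max_{v \in V} deg^+(v)\right)$.

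The main obstacle I anticipate is accounting honestly for the out-degree factor. Strictly, only the forward-processing steps (the terminal and nonterminal cases) incur the edge loops, whereas the $pop$ and $create$ operations that contribute part of the cubic term are untouched by out-degree; a tighter analysis therefore yields $O(|V|^3 + |V|^2 \cdot \max_{v} deg^+(v))$. Since $\max_{v} deg^+(v) \ge 1$, this finer estimate is subsumed by the claimed bound, so the statement holds as a safe over-approximation. The delicate part of the argument is precisely to verify that inflating the entire $O(|V|^3)$ term, rather than only its forward component, never undercounts any operation performed by the algorithm.
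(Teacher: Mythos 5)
Your proof is correct and follows essentially the same route as the paper's: bound the number of descriptors by $O(|V|^2)$ via Lemma~\ref{lem:Descriptors}, reuse the per-function complexity analysis of the original GLL algorithm from~\cite{gllParsingTree}, and charge the extra factor $\max_{v \in V} deg^+(v)$ to the modified \textbf{Processing} function. Your closing refinement---that only the forward (terminal/nonterminal) steps incur the out-degree factor, so a tighter bound is $O\left(|V|^3 + |V|^2 \cdot \max_{v \in V} deg^+(v)\right)$, which the stated bound safely subsumes since $\max_{v \in V} deg^+(v) \geq 1$---is in fact more careful than the paper's own one-line multiplication of the cubic term.
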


\begin{proof}

From Lemma~\ref{lem:Descriptors}, there are at most $O(|V|^2)$ descriptors. 
Complexity of all functions which were used in algorithm is the same as in proof of \textbf{Theorem 4} from~\cite{gllParsingTree} except \textbf{Processing} function in which not a single next input token, but the whole set of outgoing edges, should be processed.
Thus, for each descriptor at most $$\max\limits_{v \in V}\left(deg^+\left(v\right)\right)$$ edges  are processed, where $deg^+(v)$ is outdegree of vertex $v$.

Thus, worst-case complexity of proposed algorithm is $$O\left(V^3*\max\limits_{v \in V}\left(deg^+\left(v\right)\right)\right).$$
\end{proof}


We can get estimations for linear input from theorem~\ref{thm:complexity}. $\text{For any } v \in V$, $deg^+(v) \leq 1$, thus $\max\limits_{v \in V}(deg^+(v))  = 1 $ and worst-case time complexity $O(|V|^3)$, as expected. 
For LL grammars and linear input complexity should be $O(|V|)$ for the same reason as for original GLL.
 
As discussed in~\cite{modellingGLL}, special data structures, which are required for the basic algorithm, can be not rational for practical implementation, and it is necessary to find balance between performance, software complexity, and hardware resources.
As a result, we can get slightly worse performance than theoretical estimation in practice.

Note that result SPPF contains only paths matched specified query, so result SPPF size is $O(|V'|^3 + |E'|)$ where $M'=(V',E',L')$ is a subgraph of input graph $M$ which contains only matched paths.
Also note that each specific path can be explored by linear SPPF traversal. 

\subsection{Example}

Let us present a solution for the problem stated in motivating example section (\ref{motivExample}): grammar $G_1$ is a query and we want to find all paths in graph $M$ (presented in picture~\ref{input}) which match this query.
Result SPPF for this input is presented in figure~\ref{SPPF}. Note that presented version does not contains redundant nodes.
Each terminal node corresponds to the edge in the input graph: for each node with label $(v_0, T, v_1)$ there is $e\in E: e=(v_0,T,v_1)$.
We duplicate terminal nodes only for figure simplification.

\begin{figure}[h]
    \begin{center}
        \includegraphics[width=8cm]{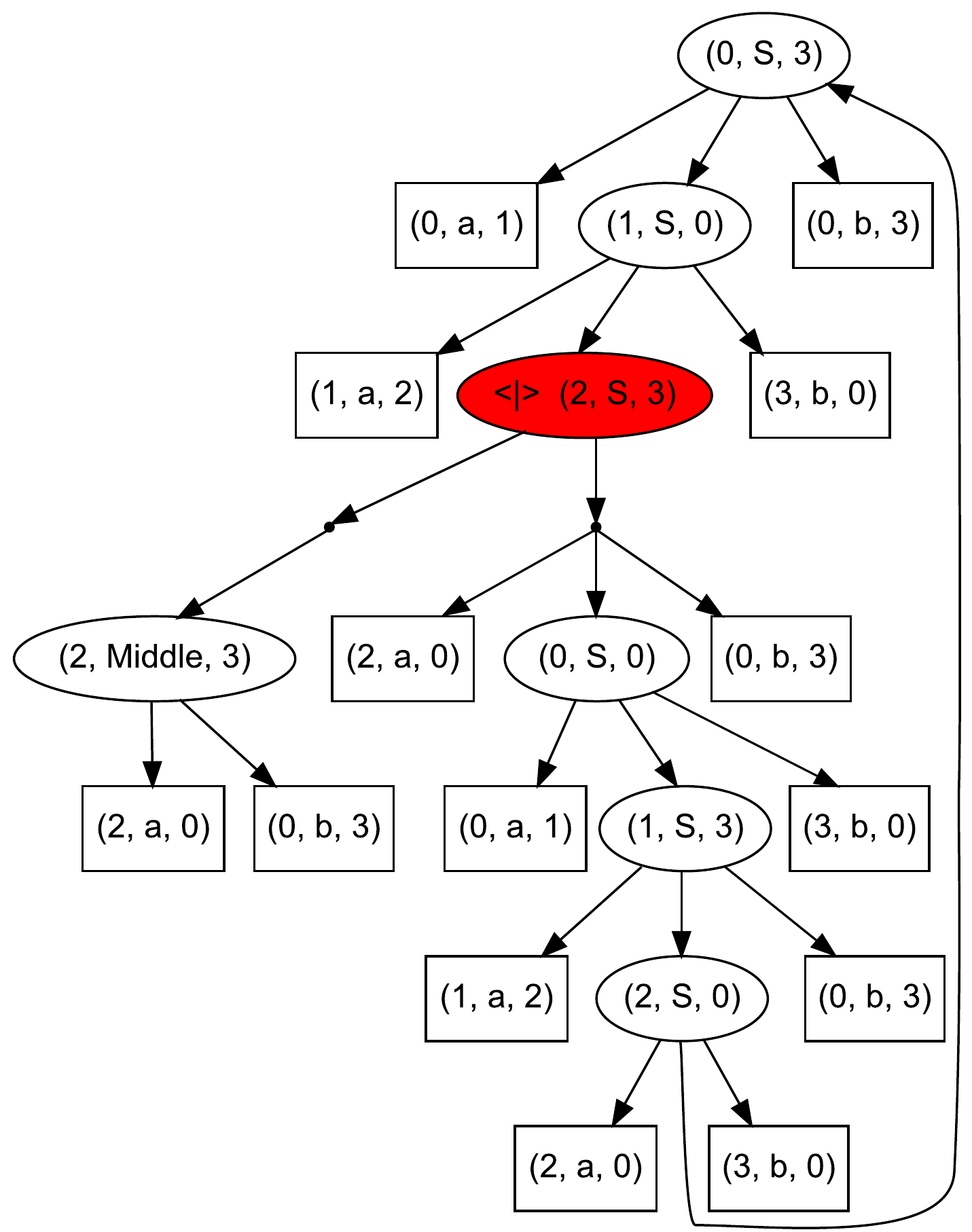}
        \caption{Result SPPF for input graph $M$(fig.~\ref{input}) and query $G_1$(fig.~\ref{grammarG})}
        \label{SPPF}        
    \end{center}
\end{figure}

As an example of derivation structure usage, we can find a middle of any path in example simply by finding correspondent nonterminal \textit{Middle} in SPPF.
So we can find out that there is only one (common) middle for all results, and it is a vertex with $id = 0$. 

Extensions stored in nodes allow us to check whether path from $u$ to $v$ exists and to extract it. 
We need only to traverse SPPF which can be done in polynomial time (in terms of SPPF size) to extract any path . 

Lets find paths $p_i$ such that $S {\xRightarrow[G_1]{}}^{*} \Omega(p_i)$ and $p_i$ starts from the vertex $0$.
To do this, we should find vertices with label $(0, S, \_)$ in SPPF.
(There are two vertices with such labels: $(0, S, 0)$ and $(0, S, 3)$.)
Then let us to extract corresponded paths from SPPF.
There is a cycle in SPPF in our example, so there are \textbf{at least} two different paths: $$p_0=\{(0,a,1);(1,a,2);(2,a,0);(0,b,3);(3,b,0);(0,b,3)\}$$ and 
\begin{align*}
p_1=\{&(0,a,1);(1,a,2);(2,a,0);(0,a,1);(1,a,2);(2,a,0);\\ &(0,b,3);(3,b,0);(0,b,3);(3,b,0);(0,b,3);(3,b,0)\}.
\end{align*}

We demonstrate that SPPF which was constructed by described algorithm can be useful for query result investigation. 
But in some cases explicit representation of matched subgraph is preferable, and required subgraph may be extracted from SPPF trivially by its traversal.

\section{Evaluation}

In this section we show that performance of implemented algorithm is in good agreement with theoretical estimations, and that the worst-case time and space complexity can be achieved.
We also present the application of our algorithm to the problem of querying RDF ontologies.

All tests were run on a PC with the following characteristics:
\begin{itemize}
\item OS: Microsoft Windows 10 Pro
\item System Type: x64-based PC
\item CPU: Intel(R) Core(TM) i7-4790 CPU @ 3.60GHz, 3601 Mhz, 4 Core(s), 4 Logical Processor(s)
\item RAM: 32 GB
\end{itemize}

\subsection{Ontology querying}

One of classical graph querying problems is a navigation queries for ontologies, and we apply our algorithm to this problem in order to estimate its practical value.
We used dataset from paper~\cite{CFGonRDF}.
Our algorithm is aimed to process graphs, so RDF files were converted to edge-labeled directed graph.
For each triple $(o,p,s)$ form RDF we added two edges: $(o,p,s)$ and $(s,p^{-1},o)$.

We perform two classical \textit{same-generation queries}~\cite{FndDB}.

\textbf{Query 1} is based on the grammar for retrieving concepts on the same layer (presented in figure~\ref{grammarQ1}).
For this query our algorithm demonstrates up to 1000 times better performance and provides identical results as compared to the presented in~\cite{CFGonRDF} for $Q_1$. 

\textbf{Query 2} is based on the grammar for retrieving concepts on the adjacent layers (presented in figure~\ref{grammarQ2}). 
Note that this query differs from the original query $Q_2$ from article~\cite{CFGonRDF} in the following details.
First of all, we count only triples for nonterminal $S$ because only paths derived from it correspond to paths between concepts on adjacent layers.
Algorithm which is presented in~\cite{CFGonRDF} returns triples for all nonterminals.
Moreover, grammar $\mathcal{G}_2$, which is presented in~\cite{CFGonRDF}, describes paths not only between concepts on adjacent layers.
For example, path ``$\text{\textit{subClassOf} \textit{subClassOf}}^{-1}$'' can be derived in $\mathcal{G}_2$, but it is a path between concepts on the same layer, not adjacent.
We changed the grammar to fit a query to a description provided in paper~\cite{CFGonRDF}. 
Thus results of our query is different from results for $Q_2$ which provided in paper~\cite{CFGonRDF}.

Results of both queries are presented in table~\ref{tbl1}, where \#triples is a number of $(o,p,s)$ triples in RDF file, and \#results is a number of triples of form $(S,v_1,v_2)$.
In our approach result triples can be founded by filtering out all SPPF nonterminal nodes labeled by $(v_1,S,v_2)$.

\begin{figure}[ht]
   \begin{center}
   \[
\begin{array}{rl}
   0: & S \rightarrow \text{\textit{subClassOf}}^{-1} \ S \ \text{\textit{subClassOf}} \\ 
   1: & S \rightarrow \text{\textit{type}}^{-1} \ S \ \text{\textit{type}} \\ 
   2: & S \rightarrow \text{\textit{subClassOf}}^{-1} \ \text{\textit{subClassOf}} \\ 
   3: & S \rightarrow \text{\textit{type}}^{-1} \ \text{\textit{type}} \\ 
\end{array}
\]
   \caption{Grammar for query 1}
   \label{grammarQ1}        
   \end{center}
\end{figure}

\begin{figure}[ht]
   \begin{center}
   \[
\begin{array}{rl}
   0: & S \rightarrow B \ \text{\textit{subClassOf}} \\ 
   1: & B \rightarrow \text{\textit{subClassOf}}^{-1} \ B \ \text{\textit{subClassOf}} \\
   2: & B \rightarrow \text{\textit{subClassOf}}^{-1} \ \text{\textit{subClassOf}} \\ 
\end{array}
\]
   \caption{Grammar for query 2}
   \label{grammarQ2}        
   \end{center}
\end{figure}

\begin{table*}
\centering
\caption{Evaluation results for Query 1 and Query 2}
\label{tbl1}

\begin{tabular}{ | c | c | c | c | c | c |}
\hline
Ontology & \#triples & \multicolumn{2}{|c|}{Query 1} & \multicolumn{2}{|c|}{Query 2} \\
\cline{3-6}
& & time(ms) & \#results & time(ms) & \#results \\
\hline 
\hline
skos        & 252 & 10 & 810 & 1 & 1 \\
generations & 273 & 19 & 2164 & 1 & 0 \\
travel      & 277 & 24 & 2499 & 1 & 63 \\
univ-bench  & 293 & 25 & 2540 & 11 & 81 \\
foaf        & 631 & 39 & 4118 & 2 & 10 \\
people-pets & 640 & 89 & 9472 & 3 & 37 \\
funding     & 1086 & 212 & 17634 & 23 & 1158 \\
atom-primitive & 425 & 255 & 15454 & 66 & 122 \\
biomedical-measure-primitive & 459 & 261 & 15156 & 45 & 2871 \\
pizza       & 1980 & 697 & 56195 & 29 & 1262 \\
wine        & 1839 & 819 & 66572 & 8 & 133 \\
\hline
\end{tabular}

\end{table*}

As a result, we conclude that our algorithm is fast enough to be applicable to some real-world problems.

\subsection{Worst-case Complexity} 

We use two grammars for balanced brackets --- ambiguous grammar $G_0$(fig.~\ref{grammarG0}) and unambiguous grammar $G_2$(fig.~\ref{grammarG2}) --- in order to investigate performance and grammar ambiguity correlation.

\begin{figure}[ht]
   \begin{center}
   \[
\begin{array}{rl}
   0: & S \rightarrow a \ S \ b \ S \\ 
   1: & S \rightarrow \varepsilon
\end{array}
\]
   \caption{Unambiguous grammar $G_2$ for balanced brackets}
   \label{grammarG2}        
   \end{center}
\end{figure}

As input we use complete graphs in which for each terminal symbol there is an edge labeled with it between every two vertices.
Note that we use only terminal symbols for edges labels.  
The task we solve in our experiments is to find all paths from all vertices to all vertices satisfied specified query.
Such designed input looks hard for querying in terms of required resources because there is a correct path between any two vertices and result set is infinite.

For complete graph $M=(V,E,L)$ $$\max\limits_{v \in V}\left(deg^+\left(v\right)\right) = (|V| - 1)*|\Sigma|$$, where $\Sigma$ is terminals of input grammar, hence we should get time complexity $O(|V|^4)$ and space complexity $O(|V|^3)$.

Performance measurement results are presented in figure~\ref{pic:Perf}. 
For time measurement results we have that all two curves can be fit with polynomial function of degree 4 to a high level of confidence with $R^2$. 


\begin{figure}[ht]
\centering



 \includegraphics[width=8cm]{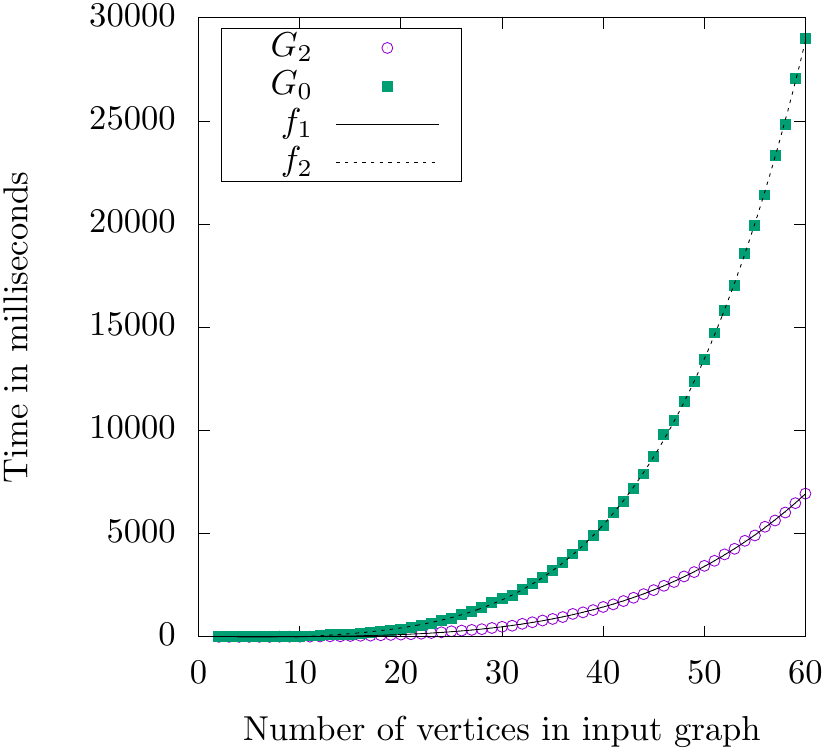}
\caption{Performance on complete graphs for grammar $G_0$ and $G_2$ \\ 
$f_1(x) = 0.000496*x^4 + 0.001252*x^3 + 0.068492*x^2 - 0.306749*x$; $R^2 = 0.99996$ \\
$f_2(x) = 0.003369*x^4 - 0.114919*x^3 + 3.161793*x^2 - 22.54949*x$; $R^2 = 0.99995$}
\label{pic:Perf}
\end{figure}

Also we present SPPF size in terms of nodes for both $G_0$ and $G_2$ grammars (fig.~\ref{pic:SPPFSize}).
As was expected, all two curves are cubic to a high level of confidence with $R^2 = 1$. 

\begin{figure}[ht]
\centering



 \includegraphics[width=8cm]{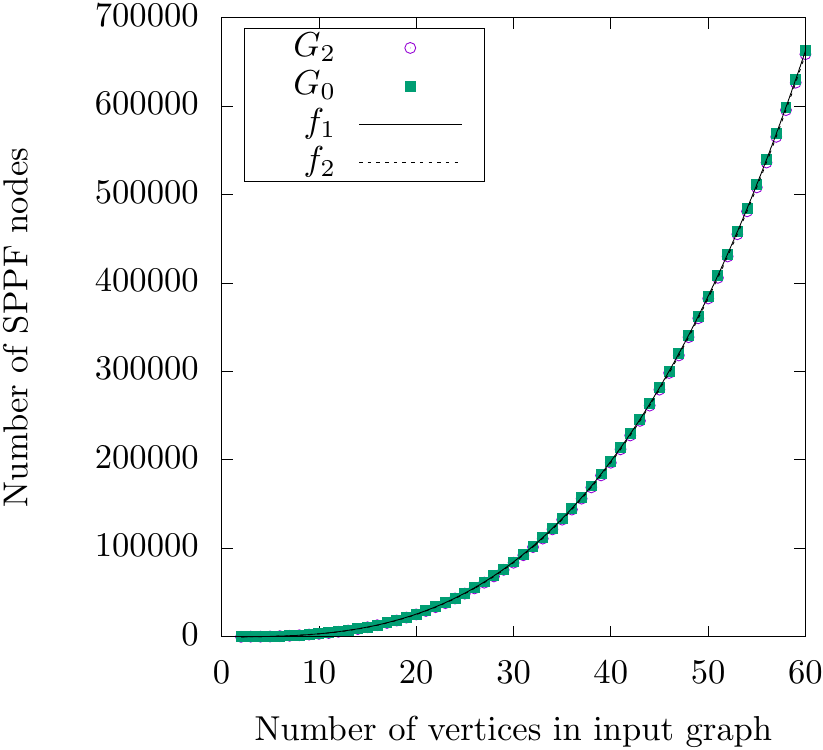}
\caption{SPPF size on complete graph for grammar $G_0$ and $G_2$ a complete graphs \\
$f_1(x) = 3.000047*x^3 + 3.994579*x^2 + 4.191568*x$; $R^2 = 1$\\
$f_2(x) = 3.000050*x^3 + 2.994338*x^2 + 4.196472*x$; $R^2 = 1$}
\label{pic:SPPFSize}
\end{figure}

%





\section{Conclusion and Future Work}

We propose GLL-based algorithm for context-free path querying which constructs finite structural representation of all paths satisfying given constraint.
Provided data structure can be useful for result investigation and processing, and for query debugging.
Presented algorithm has been implemented in F\# programming language~\cite{FSharp} and is available on GitHub:\url{https://github.com/YaccConstructor/YaccConstructor}.

In order to estimate practical value of proposed algorithm, we should perform evaluation on a real dataset and real queries.
One possible application of our algorithm is metagenomical assembly querying, and we are currently working on this topic.

We are also working on performance improvement by implementation of recently proposed modifications in original GLL algorithm~\cite{FGLL,FastPracticalGLL}.
One direction of our research is generalization of grammar factorization proposed in~\cite{FGLL} which may be useful for the processing of regular queries which are common in real world application.


\bibliographystyle{abbrv}
\bibliography{ContextFreeConstrainedPathFindingInGraph}

\begin{thebibliography}{10}

\bibitem{FndDB}
S.~Abiteboul, R.~Hull, and V.~Vianu.
\newblock Foundations of databases, 1995.

\bibitem{FastPracticalGLL}
A.~Afroozeh and A.~Izmaylova.
\newblock Faster, practical gll parsing.
\newblock In {\em International Conference on Compiler Construction}, pages
  89--108. Springer, 2015.

\bibitem{Alvor1}
A.~Annamaa, A.~Breslav, J.~Kabanov, and V.~Vene.
\newblock An interactive tool for analyzing embedded sql queries.
\newblock In {\em Asian Symposium on Programming Languages and Systems}, pages
  131--138. Springer, 2010.

\bibitem{QueryGraphWithData}
P.~Barcel{\'o}, G.~Fontaine, and A.~W. Lin.
\newblock Expressive path queries on graphs with data.
\newblock In {\em International Conference on Logic for Programming Artificial
  Intelligence and Reasoning}, pages 71--85. Springer, 2013.

\bibitem{FLCpathProblem}
C.~Barrett, R.~Jacob, and M.~Marathe.
\newblock Formal-language-constrained path problems.
\newblock {\em SIAM Journal on Computing}, 30(3):809--837, 2000.

\bibitem{LLnonLL}
J.~C. Beatty.
\newblock Two iteration theorems for the ll (k) languages.
\newblock {\em Theoretical Computer Science}, 12(2):193--228, 1980.

\bibitem{Alvor2}
A.~Breslav, A.~Annamaa, and V.~Vene.
\newblock Using abstract lexical analysis and parsing to detect errors in
  string-embedded dsl statements.
\newblock In {\em Proceedings of the 22nd Nordic Workshop on Programming
  Theory}, pages 20--22, 2010.

\bibitem{TableGLL}
S.~V. Grigorev and A.~K. Ragozina.
\newblock Generalized table-based ll-parsing.
\newblock {\em Sistemy i Sredstva Informatiki [Systems and Means of
  Informatics]}, 25(1):89--107, 2015.

\bibitem{ConjCFPathQuery}
J.~Hellings.
\newblock Conjunctive context-free path queries.
\newblock 2014.

\bibitem{hofman2015separabilityForRegQueryDebugging}
P.~Hofman and W.~Martens.
\newblock Separability by short subsequences and subwords.
\newblock In {\em LIPIcs-Leibniz International Proceedings in Informatics},
  volume~31. Schloss Dagstuhl-Leibniz-Zentrum fuer Informatik, 2015.

\bibitem{modellingGLL}
A.~Johnstone and E.~Scott.
\newblock Modelling gll parser implementations.
\newblock In {\em International Conference on Software Language Engineering},
  pages 42--61. Springer Berlin Heidelberg, 2010.

\bibitem{SPPF}
J.~G. Rekers.
\newblock {\em Parser generation for interactive environments}.
\newblock PhD thesis, Citeseer, 1992.

\bibitem{RegularDBQuery}
J.~L. Reutter, M.~Romero, and M.~Y. Vardi.
\newblock Regular queries on graph databases.
\newblock {\em Theory of Computing Systems}, pages 1--53, 2015.

\bibitem{scott2010gll}
E.~Scott and A.~Johnstone.
\newblock Gll parsing.
\newblock {\em Electronic Notes in Theoretical Computer Science},
  253(7):177--189, 2010.

\bibitem{gllParsingTree}
E.~Scott and A.~Johnstone.
\newblock Gll parse-tree generation.
\newblock {\em Science of Computer Programming}, 78(10):1828--1844, 2013.

\bibitem{FGLL}
E.~Scott and A.~Johnstone.
\newblock Structuring the gll parsing algorithm for performance.
\newblock {\em Science of Computer Programming}, 125:1--22, 2016.

\bibitem{brnglr}
E.~Scott, A.~Johnstone, and R.~Economopoulos.
\newblock Brnglr: a cubic tomita-style glr parsing algorithm.
\newblock {\em Acta informatica}, 44(6):427--461, 2007.

\bibitem{GraphQueryWithEarley}
P.~Sevon and L.~Eronen.
\newblock Subgraph queries by context-free grammars.
\newblock {\em Journal of Integrative Bioinformatics}, 5(2):100, 2008.

\bibitem{FSharp}
D.~Syme, A.~Granicz, and A.~Cisternino.
\newblock {\em Expert F\# 3.0}.
\newblock Springer, 2012.

\bibitem{Tomita}
M.~Tomita.
\newblock An efficient context-free parsing algorithm for natural languages.
\newblock In {\em Proceedings of the 9th International Joint Conference on
  Artificial Intelligence - Volume 2}, IJCAI'85, pages 756--764, San Francisco,
  CA, USA, 1985. Morgan Kaufmann Publishers Inc.

\bibitem{relaxedRNGLR}
E.~Verbitskaia, S.~Grigorev, and D.~Avdyukhin.
\newblock Relaxed parsing of regular approximations of string-embedded
  languages.
\newblock In {\em International Andrei Ershov Memorial Conference on
  Perspectives of System Informatics}, pages 291--302. Springer International
  Publishing, 2015.

\bibitem{CFGonRDF}
X.~Zhang, Z.~Feng, X.~Wang, G.~Rao, and W.~Wu.
\newblock Context-free path queries on rdf graphs.
\newblock {\em arXiv preprint arXiv:1506.00743}, 2015.

\end{thebibliography}

\appendix

\section{GLL pseudocode}\label{GLLCode}

Main functions of GLL parsing algorithms: 
\begin{itemize}
\item Algorithm~\ref{stackF}---stack and descriptors manipulation functions;
\item Algorithm~\ref{sppfF}---SPPF construction functions.
\end{itemize}

Used notation:
\begin{itemize}
\item $(L, s, j, a)$---descriptor, where $L$ is a grammar slot, $s$ is a stack node, $j$ is a position in the input string, and $a$ is a node of derivation tree;
\item $R$---working set which contains descriptors to process;
\item $U$---all descriptors was created;
\item $P$---popped nodes.
\end{itemize}

\begin{algorithm}
\begin{algorithmic}[1]
\caption{Stack and descriptors manipulation}
\label{stackF}
\Function{add}{$L,v,i,a$}
  \If{$(L,v,i,a) \notin U$}  
      \State{$U.add(L,v,i,a)$}
      \State{$R.add(L,v,i,a)$}
  \EndIf
\EndFunction

\Function{pop}{$v,i,z$}
  \If{$v \neq v_0$}  
      \State{$P.add(v,z)$}
      \ForAll{$(a,u) \in v.outEdges$}
        \State{$y \gets$ \Call{getNodeP}{$v.L, a, z$}}
        \State{\Call{add}{$v.L,u,i,y$}}
      \EndFor
  \EndIf
\EndFunction

\Function{create}{$L,v,i,a$}
  \If{$(L,i) \notin GSS.nodes$}  
      \State{$GSS.nodes.add (L,i)$}
  \EndIf
  \State{$u \gets$ $GSS.nodes.get(L, i)$}
  \If{$(u,a,v) \notin GSS.edges$}  
      \State{$GSS.edges.add(u,a,v)$}
      \ForAll{$(u,z) \in P$}
         \State{$y \gets$ \Call{getNodeP}{$L, a, z$}}
         \State{$(\_,\_, k) \gets z.lbl$}
         \State{\Call{add}{$L,v,k,y$}}
      \EndFor
  \EndIf
  \Return{$u$}
\EndFunction

\end{algorithmic}
\end{algorithm}

\begin{algorithm}
\begin{algorithmic}[1]
\caption{SPPF construction}
\label{sppfF}

\Function{getNodeT}{$x,i$}
  \If{$x = \varepsilon$}  
      \State{$h \gets i$}
  \Else
      \State{$h \gets i + 1$}
  \EndIf
  \If{$(x,i,h) \notin SPPF.nodes$}  
      \State{$SPPF.nodes.add(x,i,h)$}
  \EndIf
  \State{\Return{$SPPF.nodes.get(x,i,h)$}}
\EndFunction

\Function{getNodeP}{$(X \rightarrow \omega_1 \cdot \omega_2), a ,z$}
  \If{$\omega_1$ is terminal or non-nullable nonterminal and $\omega_2 \neq \varepsilon$}  
      \State{\Return{$z$}}
  \Else
      \If{$\omega_2 = \varepsilon$}  
          \State{$t \gets X$}
       \Else
          \State{$h \gets (X \rightarrow \omega_1 \cdot \omega_2)$}
       \EndIf
       \State{$(q,k,i) \gets z.lbl$}
      \If{$a \neq dummy$}  
         \State{$(s,j,k) \gets a.lbl$}
         \State{$y \gets findOrCreate \ SPPF.nodes \ (n.lbl = (t,i,j))$}
         \If{$y$ does not have a child labeled $(X \rightarrow \omega_1 \cdot \omega_2)$}
             \State{$y' \gets newPackedNode(a,z)$}
             \State{$y.chld.add \ y'$}
             \State{\Return{$y$}}
         \Else
             \State{$y \gets findOrCreate \ SPPF.nodes \ (n.lbl = (t,k,i))$}
             \If{$y$ does not have a child labeled $(X \rightarrow \omega_1 \cdot \omega_2)$}
                \State{$y' \gets newPackedNode(z)$}
                \State{$y.chld.add \ y'$}
                \State{\Return{$y$}}
             \EndIf
         \EndIf
      \EndIf
  \EndIf
  \State{\Return{$SPPF.nodes.get(x,i,h)$}}
\EndFunction

\end{algorithmic}
\end{algorithm}

\balancecolumns

\end{document}